\documentclass[12pt, onecolumn,draftclsnofoot]{IEEEtran}%
\usepackage[utf8]{inputenc}



\usepackage[T1]{fontenc}
\usepackage{amsmath,amssymb,amsfonts,mathrsfs,bm}
\usepackage{mathtools}
\usepackage{amsthm}
\usepackage{nicefrac}
\usepackage{cite}
\usepackage[shortlabels]{enumitem}
\usepackage{graphicx}
\usepackage{epstopdf}
\usepackage{url}
\usepackage{colortbl}
\usepackage{booktabs}
\usepackage{multirow}
\usepackage[table,dvipsnames]{xcolor}
\usepackage[normalem]{ulem}


\usepackage{array}
\newcolumntype{L}[1]{>{\raggedright\let\newline\\\arraybackslash\hspace{0pt}}m{#1}}
\newcolumntype{C}[1]{>{\centering\let\newline\\\arraybackslash\hspace{0pt}}m{#1}}
\newcolumntype{R}[1]{>{\raggedleft\let\newline\\\arraybackslash\hspace{0pt}}m{#1}}

\makeatletter
\let\MYcaption\@makecaption
\makeatother
\usepackage[font=footnotesize]{subcaption}
\makeatletter
\let\@makecaption\MYcaption
\makeatother

\usepackage{xparse}

\makeatletter

\makeatother

\usepackage{glossaries}

\makeatletter
\let\oldgls\gls
\let\oldglspl\glspl

\newcommand\fussy@ifnextchar[3]{%
  \let\reserved@d=#1%
  \def\reserved@a{#2}%
  \def\reserved@b{#3}%
  \futurelet\@let@token\fussy@ifnch}
\def\fussy@ifnch{%
  \ifx\@let@token\reserved@d
    \let\reserved@c\reserved@a 
  \else
    \let\reserved@c\reserved@b
  \fi
 \reserved@c}

\renewcommand{\gls}[1]{%
  \oldgls{#1}\fussy@ifnextchar.{\@checkperiod}{\@}}
\renewcommand{\glspl}[1]{%
  \oldglspl{#1}\fussy@ifnextchar.{\@checkperiod}{\@}}

\newcommand{\@checkperiod}[1]{%
  \ifnum\sfcode`\.=\spacefactor\else#1\fi
}
\makeatother

\newacronym{wrt}{w.r.t.}{with respect to}
\newacronym{RHS}{RHS}{right-hand side}
\newacronym{LHS}{LHS}{left-hand side}
\newacronym{iid}{i.i.d.}{independent and identically distributed}

\usepackage{float}

\ifx\notloadhyperref\undefined
	\ifx\loadbibentry\undefined
		\usepackage[hidelinks,hypertexnames=false]{hyperref} 
	\else
		\usepackage{bibentry}
		\makeatletter\let\saved@bibitem\@bibitem\makeatother
		\usepackage[hidelinks,hypertexnames=false]{hyperref}
		\makeatletter\let\@bibitem\saved@bibitem\makeatother
	\fi
\else
	\ifx\loadbibentry\undefined
		\relax
	\else
		\usepackage{bibentry}
	\fi
\fi

\usepackage[capitalize]{cleveref}
\crefname{equation}{}{}
\Crefname{equation}{}{}
\crefname{claim}{claim}{claims}
\crefname{step}{step}{steps}
\crefname{line}{line}{lines}
\crefname{condition}{condition}{conditions}
\crefname{dmath}{}{}
\crefname{dseries}{}{}
\crefname{dgroup}{}{}

\crefname{Problem}{Problem}{Problems}
\crefformat{Problem}{Problem~(#2#1#3)}
\crefrangeformat{Problem}{Problems~(#3#1#4) to~(#5#2#6)}

\crefname{Theorem}{Theorem}{Theorems}
\crefname{Corollary}{Corollary}{Corollaries}
\crefname{Proposition}{Proposition}{Propositions}
\crefname{Lemma}{Lemma}{Lemmas}
\crefname{Definition}{Definition}{Definitions}
\crefname{Example}{Example}{Examples}
\crefname{Assumption}{Assumption}{Assumptions}
\crefname{Remark}{Remark}{Remarks}
\crefname{Rem}{Remark}{Remarks}
\crefname{remarks}{Remarks}{Remarks}
\crefname{Appendix}{Appendix}{Appendices}
\crefname{Exercise}{Exercise}{Exercises}
\crefname{Theorem_A}{Theorem}{Theorems}
\crefname{Corollary_A}{Corollary}{Corollaries}
\crefname{Proposition_A}{Proposition}{Propositions}
\crefname{Lemma_A}{Lemma}{Lemmas}
\crefname{Definition_A}{Definition}{Definitions}

\usepackage{crossreftools}
\ifx\notloadhyperref\undefined
	\pdfstringdefDisableCommands{%
			\let\Cref\crtCref
			\let\cref\crtcref
	}
\else
	\relax
\fi

\usepackage{algorithm,algorithmic}

\ifx\loadbreqn\undefined
	\relax
\else
	\usepackage{breqn} 
\fi


\interdisplaylinepenalty=2500   



\ifx\renewtheorem\undefined
\ifx\useTheoremCounter\undefined
\newtheorem{Theorem}{Theorem}
\newtheorem{Corollary}{Corollary}
\newtheorem{Proposition}{Proposition}
\newtheorem{Lemma}{Lemma}
\else
\newtheorem{Theorem}{Theorem}

\fi

\newtheorem{Definition}{Definition}
\newtheorem{Example}{Example}


\fi

\theoremstyle{remark}

\theoremstyle{plain}







\newcommand{\calS}{\mathcal{S}}

\newcommand{\calU}{\mathcal{U}}


\newcommand{\bA}{\mathbf{A}}

\newcommand{\bc}{\mathbf{c}}

\newcommand{\bD}{\mathbf{D}}

\newcommand{\boldf}{\mathbf{f}}

\newcommand{\bg}{\mathbf{g}}

\newcommand{\bh}{\mathbf{h}}

\newcommand{\bi}{\mathbf{i}}
\newcommand{\bI}{\mathbf{I}}

\newcommand{\bL}{\mathbf{L}}

\newcommand{\bM}{\mathbf{M}}
\newcommand{\bn}{\mathbf{n}}
\newcommand{\bN}{\mathbf{N}}

\newcommand{\bP}{\mathbf{P}}

\newcommand{\bS}{\mathbf{S}}

\newcommand{\bU}{\mathbf{U}}
\newcommand{\bv}{\mathbf{v}}
\newcommand{\bV}{\mathbf{V}}

\newcommand{\bW}{\mathbf{W}}
\newcommand{\bx}{\mathbf{x}}




\DeclareSymbolFont{bsfletters}{OT1}{cmss}{bx}{n}
\DeclareSymbolFont{ssfletters}{OT1}{cmss}{m}{n}
\DeclareMathSymbol{\bsfGamma}{0}{bsfletters}{'000}
\DeclareMathSymbol{\ssfGamma}{0}{ssfletters}{'000}
\DeclareMathSymbol{\bsfDelta}{0}{bsfletters}{'001}
\DeclareMathSymbol{\ssfDelta}{0}{ssfletters}{'001}
\DeclareMathSymbol{\bsfTheta}{0}{bsfletters}{'002}
\DeclareMathSymbol{\ssfTheta}{0}{ssfletters}{'002}
\DeclareMathSymbol{\bsfLambda}{0}{bsfletters}{'003}
\DeclareMathSymbol{\ssfLambda}{0}{ssfletters}{'003}
\DeclareMathSymbol{\bsfXi}{0}{bsfletters}{'004}
\DeclareMathSymbol{\ssfXi}{0}{ssfletters}{'004}
\DeclareMathSymbol{\bsfPi}{0}{bsfletters}{'005}
\DeclareMathSymbol{\ssfPi}{0}{ssfletters}{'005}
\DeclareMathSymbol{\bsfSigma}{0}{bsfletters}{'006}
\DeclareMathSymbol{\ssfSigma}{0}{ssfletters}{'006}
\DeclareMathSymbol{\bsfUpsilon}{0}{bsfletters}{'007}
\DeclareMathSymbol{\ssfUpsilon}{0}{ssfletters}{'007}
\DeclareMathSymbol{\bsfPhi}{0}{bsfletters}{'010}
\DeclareMathSymbol{\ssfPhi}{0}{ssfletters}{'010}
\DeclareMathSymbol{\bsfPsi}{0}{bsfletters}{'011}
\DeclareMathSymbol{\ssfPsi}{0}{ssfletters}{'011}
\DeclareMathSymbol{\bsfOmega}{0}{bsfletters}{'012}
\DeclareMathSymbol{\ssfOmega}{0}{ssfletters}{'012}


\newcommand{\bLambda}{\bm{\Lambda}}
\newcommand{\bSigma	}{\bm{\Sigma}}





\newcommand{\qednew}{\nobreak \ifvmode \relax \else
      \ifdim\lastskip<1.5em \hskip-\lastskip
      \hskip1.5em plus0em minus0.5em \fi \nobreak
      \vrule height0.75em width0.5em depth0.25em\fi}

\newcommand{\T}{^{\intercal}}


\newcommand{\norm}[1]{{\left\lVert{#1}\right\rVert}}


\DeclareDocumentCommand\set{s m t| m}{%
  \IfBooleanTF#1%
	{\left\{\, #2\mathrel{} \IfBooleanTF{#3}{\middle|}{:}\mathrel{}  #4\, \right\}}%
  {\{\, #2 \IfBooleanTF{#3}{\mid}{\mathrel{} : \mathrel{}} #4\, \}}%
}

\DeclareDocumentCommand \ifcond {m m} {%
	{#1} %
	\IfValueT{#2}{\, \middle|\, {#2}}%
}

	
\DeclareDocumentCommand \P {e{_} g >{\SplitArgument{ 1 }{ @| }}d() g } {%
	\mathbb{P}%
	\IfValueTF{#1}{_{#1}}
		{\IfValueT{#2}{_{#2}}}%
	\IfValueT{#3}{\left(\ifcond#3}%
	\IfValueT{#4}{\, \middle|\, {#4}}%
	\IfValueT{#3}{\right)}%
}

\DeclareDocumentCommand \E {e{_} g >{\SplitArgument{ 1 }{ @| }}o g } {%
	\mathbb{E}%
	\IfValueTF{#1}{_{#1}}
		{\IfValueT{#2}{_{#2}}}%
	\IfValueT{#3}{\left[\ifcond#3}%
	\IfValueT{#4}{\, \middle|\, {#4}}%
	\IfValueT{#3}{\right]}%
}

\definecolor{gray90}{gray}{0.9}

\ifx\nohighlights\undefined

	\newcommand{\msout}[1]{\text{\color{green} \sout{\ensuremath{#1}}}}
	\newcommand{\del}[1]{{\color{green}\ifmmode \msout{#1}\else\sout{#1}\fi}}

\else

	\newcommand{\msout}[1]{#1}
	\newcommand{\del}[1]{#1}
\fi

\newcommand{\hhide}[1]{}

\renewcommand{\figurename}{Fig.}
\newcommand{\figref}[1]{\figurename~\ref{#1}}
\graphicspath{{./Figures/}} 
\pdfsuppresswarningpagegroup=1

\newcommand{\includeCroppedPdf}[2][]{%
    \IfFileExists{./Figures/#2-crop.pdf}{}{%
        \immediate\write18{pdfcrop ./Figures/#2 ./Figures/#2-crop.pdf}}%
    \includegraphics[#1]{./Figures/#2-crop.pdf}}


\ifx\diagnoselabel\undefined
	\relax
\else
	\makeatletter
	 \def\@testdef #1#2#3{%
		 \def\reserved@a{#3}\expandafter \ifx \csname #1@#2\endcsname
		\reserved@a  \else
	 \typeout{^^Jlabel #2 changed:^^J%
	 \meaning\reserved@a^^J%
	 \expandafter\meaning\csname #1@#2\endcsname^^J}%
	 \@tempswatrue \fi}
	\makeatother
\fi


\usepackage{tikz-cd} 

\title{Lossless digraph signal processing via polar decomposition}
\author{Feng~Ji %
\thanks{The author is with the School of Electrical and Electronic Engineering, Nanyang Technological University, 639798, Singapore (e-mail: jifeng@ntu.edu.sg).}%
}

\begin{document}
    
\maketitle

\begin{abstract}
In this paper, we present a signal processing framework for directed graphs. Unlike undirected graphs, a graph shift operator such as the adjacency matrix associated with a directed graph usually does not admit an orthogonal eigenbasis. This makes it challenging to define the Fourier transform. Our methodology leverages the polar decomposition to define two distinct eigendecompositions, each associated with different matrices derived from this decomposition. We propose to extend the frequency domain and introduce a Fourier transform that jointly encodes the spectral response of a signal for the two eigenbases from the polar decomposition. This allows us to define convolution following a standard routine. Our approach has two features: it is lossless as the shift operator can be fully recovered from factors of the polar decomposition. Moreover, it subsumes the traditional graph signal processing if the graph is directed. We present numerical results to show how the framework can be applied. 
\end{abstract}

\begin{IEEEkeywords}
Directed graph, graph signal processing, polar decomposition
\end{IEEEkeywords}

\section{Introduction} \label{sec:int}
The field of graph signal processing (GSP) is experiencing significant growth and focuses on the analysis of signals defined on graphs \cite{Shu13, San13, Tsi15, OrtFroKov:J18, Ji19, LeuMarMou:J23, JiaFenTay:J23, Ji22}. Numerous real-world phenomena, including social networks, transportation systems, and sensor networks, can be naturally represented as graphs. In GSP, the key idea revolves around the vector space of graph signals, where each node in a given graph is assigned a numerical value by a graph signal. As signals on a graph form a vector space, GSP employs linear transformations, such as the graph Fourier transform (GFT) and graph filters, to examine and explore relationships among graph signals.

On the other hand, despite a growing need for applications that involve directed graphs 
(digraphs)~\cite{sarcheshmehpour2021federated, giraldo2020semi}, 
the intrinsic asymmetry of their edge connections presents challenges in defining GFTs. First, the basic graph operators become non-diagonalizable for a large number of digraphs. This makes it challenging to define key concepts such as the frequency domain and convolution. Second, ``ordering" the graph Fourier basis, so that we can interpret them as low or high frequencies is difficult. This is because the eigenvalues, which can be used for the ordering, can be complex numbers. 

As a consequence, a variety of approaches have been proposed to address these challenges. For example, studies directly using the adjacency matrix as the graph operator have suggested utilizing the Jordan decomposition~\cite{Sandryhaila_2013_2, Domingos_2020} or the singular value decomposition~\cite{Chen_2022} of the adjacency matrix for constructing GFT. Symmetrization of the adjacency matrix has also been studied~\cite{symmetrizations}. \cite{Sardellitti} proposes solving an optimization based on graph directed variation to obtain an orthonormal basis for Fourier transform. 
There are studies that define the graph operator as decomposed matrices of the adjacency matrix, producing diagonalizable matrices~\cite{MHASKAR2018611, unitaryshift}. In \cite{Furutani2019GraphSP}, the authors propose to directly modify the graph Laplacian so that the resulting one is Hermitian and hence diagonalizable.
The recent work \cite{9325908} proposes to modify the graph structure so that the corresponding graph operator on the resulting graph becomes diagonalizable. However, these approaches have their respective shortcomings. For example, the Jordan decomposition is not orthogonal. Hence, signal energy is not preserved after the Fourier transformation, and the basis vectors have non-trivial interactions among themselves. The approaches of optimization and topology perturbation are usually not lossless. The original directed graph topology cannot be faithfully recovered from the basis of the Fourier transform. 

In this paper, we propose a lossless digraph signal process scheme. There are two main ingredients of our approach. For a graph shift operator (GSO) $\bS$, which may not be symmetric, we consider the polar decomposition of $\bS$. This means that $\bS$ is the product of a unitary matrix $\bU$ and symmetric matrix $\bP$, both are orthogonally diagonalizable. Then, we introduce the Fourier transform that utilizes the orthonormal basis of $\bU$ and $\bP$ jointly. Unlike the traditional GSP, the frequency domain is $n^2$ dimensional instead of $n$ dimensional, which allows us to package more information and thus the framework is lossless. Moreover, we also demonstrate that our framework subsumes the traditional GSP if $\bS$ is symmetric, and hence we have a strict generalization of GSP.

The rest of the paper is organized as follows. In \cref{sec:pre}, we review the traditional GSP and matrix polar decomposition, which is the fundamental tool of this paper. In \cref{sec:tft}, we introduce the Fourier transform. The main idea is to extend the frequency domain from $\mathbb{R}^n$ to $\mathbb{C}^{n^2}$. Once we have properly introduced the frequency domain, we define the notion of convolution in \cref{sec:con} following a standard routine. In \cref{sec:grp}, we study how matrix perturbation affects the framework. We discuss miscellaneous topics in \cref{sec:mis} such as how to resolve the issue of non-uniqueness of polar decomposition and eigendecomposition. We present numerical results in \cref{sec:nur} and conclude in \cref{sec:conc}.

\section{Preliminaries} \label{sec:pre}
In this section, we review the basics of graph signal processing (GSP) \cite{Shu13} and polar decomposition \cite{Hal15}. In the paper, we use the term ``GSP'' exclusively for the traditional graph signal processing on undirected graphs. 

Let $G=(V,E)$ be a graph of size $n$. If $G$ is undirected, GSP requires a symmetric graph shift operator (GSO) $\bS$. Common choices of $\bS$ include its adjacency matrix $\bA$ and the Laplacian matrix $\bL = \bA - \bD$, where $\bD$ is the degree matrix. By the spectral theorem, the symmetric matrix $\bS$ admits an eigendecomposition $\bS = \bV\bLambda\bV^{\T}$, where $\bLambda$ is the diagonal matrix consisting of eigenvalues of $\bS$ and the columns of $\bV$ are the corresponding eigenvectors. Thus $\bV$ is an orthogonal matrix. Suppose $V=\{v_1,\ldots,v_n\}$ has a fixed ordering. A graph signal $\boldf = (x_i)_{1\leq i\leq n}$ is a vector in $\mathbb{R}^n$, where $x_i$ is the signal assigned to $v_i$. The graph Fourier transform (GFT) of $\bx$ w.r.t.\ $\bS$ is
\begin{align} \label{eq:wbb}
    \widehat{\boldf} = \bV^{\T}\boldf \in \mathbb{R}^n. 
\end{align}
The frequency domain is indexed by the eigenvalues of $\bS$. 

However, if $G$ is directed, then we have the challenge that $\bS = \bA$ or $\bL$ may not be symmetric. We usually do not have an eigendecomposition and thus important signal processing notions such as GFT (\ref{eq:wbb}) cannot be properly defined. We propose to resolve this issue using matrix polar decomposition \cite{Hal15, Kwa23}, which is reviewed next. 

Recall that for any $n\times n$ matrix $\bS$, its polar decomposition is $\bS = \bU\bP$, where $\bU$ is an orthogonal matrix and $\bP$ is symmetric and positive semi-definite. Moreover, if $\bS$ is invertible, then the decomposition is unique, and $\bP$ is positive definite. Explicitly, let $\bS = \bV\bSigma\bW^{\T}$ be the singular value decomposition of $\bS$ \cite{Tao12}, then $\bU = \bV\bW^{\T}$ and $\bP = \bW\bSigma\bW^{\T}$. 

Both $\bU$ and $\bP$ are normal and thus they admit (complex) eigendecompositions, i.e., $\bU= \bV_1\bLambda_1\bV_1^H$ and $\bP= \bV_2\bLambda_2\bV_2^H$, where $\bV_1, \bV_2$ are unitary, $\bLambda_1, \bLambda_2$ are diagonal and $^H$ is the adjoint operator (Hermitian transpose). Geometrically, $\bP$ performs scaling, while $\bU$ performs a rotation/reflection. By convention, we increasingly order the eigenvalues of $\bLambda_2$ according to the absolute value and those of $\bLambda_1$ according to the complex phase. We shall use $\bV_1,\bV_2$ to define the Fourier transform in the next section. 

\section{The Fourier transform} \label{sec:tft}

In this section, we define the Fourier transform. We use the same notations as in \cref{sec:pre} for the polar decomposition. To motivate, consider a (complex) graph signal $\boldf$. We want to study its shifted signal $\bg = \bS(\boldf) = \bU\bP(\boldf)$. Its magnitude $|\bg|$ is the same as $|\bP(\boldf)|$. Therefore, taking ordinary GFT (\ref{eq:wbb}) w.r.t.\ $\bP$ gives us ``smoothness'' information of $\boldf$. However, rotational information is missing if we ignore $\bU$. To incorporate contributions from both $\bP$ and $\bU$, we follow the idea of \cite{Ji22} and propose to expand the spectral domain from $\mathbb{R}^n$ to $\mathbb{R}^{n}\otimes \mathbb{C}^n \subset \mathbb{C}^{n}\otimes \mathbb{C}^n \cong \mathbb{C}^{n^2}$. Following the ordering of the matrix multiplication, the proposed Fourier transform should encode how $\boldf$ respond w.r.t.\ $\bP$ and how eigenvectors of $\bP$ should respond w.r.t.\ $\bU$ (cf.\ \cref{sec:gmd} below). 

Specifically, we denote the columns of $\bV_1$ by $\{\bv_{1,1},\ldots,\bv_{1,n}\}$ and of $\bV_2$ by $\{\bv_{2,1},\ldots,\bv_{2,n}\}$. They are the eigenvectors of $\bV_1$ and $\bV_2$ respectively. For the signal $\boldf$, we may express it as
\begin{align} \label{eq:bsw}
    \boldf = \sum_{1\leq i\leq n} \widehat{\boldf}_i\bv_{2,i}, 
\end{align}
where $\widehat{\boldf}_i = \langle \bv_{2,i}, \boldf\rangle = \bv_{2,i}^H\boldf$ is the GFT of $\boldf$ w.r.t.\ $\bP$ and $\langle \cdot, \cdot \rangle$ is the complex inner product.  

On the other hand, for each $\bv_{2,i}$, we may inspect its ``angle'' w.r.t.\ to each direction of rotation $\bv_{1,j}$ as $p_{i,j} = \langle \bv_{1,j},\bv_{2,i} \rangle$. Therefore, 
\begin{align} \label{eq:bsa}
    \bv_{2,i} = \sum_{1\leq j\leq n} p_{i,j}\bv_{1,j}. 
\end{align}

Combining (\ref{eq:bsw}) and (\ref{eq:bsa}), we have
\begin{align*}
    \boldf = \sum_{1\leq j\leq n}\Big(\sum_{1\leq i\leq n}\widehat{\boldf}_ip_{i,j}\Big)\bv_{1,j}. 
\end{align*}
Therefore, we define the Fourier transform by considering contributions from $\bV_1$ and $\bV_2$ jointly.

\begin{Definition} \label{defn:tft}
    The \emph{Fourier transform} w.r.t.\ $\bS$ is $\mathcal{F}_{\bS}:\mathbb{R}^n \to \mathbb{C}^{n^2}$ is defined as follows: for a signal $\boldf$, then 
    \begin{align} \label{eq:mfb}
    \mathcal{F}_{\bS}(\boldf)_{i,j} = \widehat{\boldf}_ip_{i,j} = \langle \bv_{2,i},\boldf \rangle \langle \bv_{1,j},\bv_{2,i} \rangle.
    \end{align}
    For simplicity, we usually denote $\mathcal{F}_{\bS}(\boldf)$ by $\mathcal{F}(\boldf)$ or $\widetilde{\boldf}$ if $\bS$ is clear from the context. 
    
    The \emph{inverse transform} $\mathcal{I}_{\bS}: \mathbb{C}^{n^2} \to \mathbb{C}^n$ (or simply $\mathcal{I}$) is 
    \begin{align*}
    (a_{i,j})_{1\leq i,j\leq n} \mapsto \sum_{1\leq j\leq n}\sum_{1\leq i\leq n}a_{i,j}\bv_{1,j}.
    \end{align*}
\end{Definition}

To give an explicit matrix formula for the Fourier transform, let $\bg$ be any (column) vector, and use $D(\bg)$ to denote the diagonal matrix with diagonal $\bg$. Moreover, use $\bc$ to denote the constant (column) vector with each entry $1$. Then we have the following observations.

\begin{Lemma} \label{lem:mbb}
\begin{enumerate}[(a)]
\item $\mathcal{F}(\boldf) = \bV_1^H\bV_2D(\bV_2^H\boldf)$ and $\mathcal{I}(\bM) = \bV_1\bM\bc$ for $\boldf \in \mathbb{R}^n, \bM \in \mathbb{C}^{n^2}$. 
    \item Parseval's identity: $\norm{\boldf} = \norm{\mathcal{F}(\boldf)}$, where $\norm{\cdot}$ is the complex Euclidean norm. 
\end{enumerate}
\end{Lemma}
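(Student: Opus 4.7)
The plan is to verify both parts directly from \cref{defn:tft} using standard matrix bookkeeping, with no deep obstacle in sight.

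For part (a), I would first observe that $\bV_2^H\boldf$ is the column vector whose $i$-th entry is $\bv_{2,i}^H\boldf = \widehat{\boldf}_i$, so $D(\bV_2^H\boldf)$ is the diagonal matrix with diagonal entries $\widehat{\boldf}_1,\ldots,\widehat{\boldf}_n$. Since the rows of $\bV_1^H$ are $\bv_{1,j}^H$ and the columns of $\bV_2$ are $\bv_{2,i}$, the $(j,i)$-entry of $\bV_1^H\bV_2$ is $\bv_{1,j}^H\bv_{2,i} = p_{i,j}$. Multiplying on the right by $D(\bV_2^H\boldf)$ scales each column $i$ by $\widehat{\boldf}_i$, producing the entry $p_{i,j}\widehat{\boldf}_i$, which matches the definition of $\mathcal{F}(\boldf)_{i,j}$ once the $(i,j)$ tuple in \eqref{eq:mfb} is read off the matrix positions. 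For the inverse, $(\bM\bc)_k = \sum_\ell \bM_{k,\ell}$ is a column vector of row sums, and $\bV_1\bM\bc = \sum_k\bigl(\sum_\ell \bM_{k,\ell}\bigr)\bv_{1,k}$ coincides with $\mathcal{I}(\bM)$ after relabeling summation indices.

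For part (b), I would expand the squared Frobenius-type norm entrywise using \eqref{eq:mfb}:
\begin{align*}
    \norm{\mathcal{F}(\boldf)}^2 = \sum_{i,j}\abs{\widehat{\boldf}_i}^2\abs{p_{i,j}}^2 = \sum_i \abs{\widehat{\boldf}_i}^2 \sum_j \abs{\langle \bv_{1,j},\bv_{2,i}\rangle}^2.
\end{align*}
Because $\bV_1$ is unitary, $\{\bv_{1,j}\}$ is an orthonormal basis of $\mathbb{C}^n$, so Parseval for this basis applied to the unit vector $\bv_{2,i}$ gives $\sum_j \abs{\langle \bv_{1,j},\bv_{2,i}\rangle}^2 = \norm{\bv_{2,i}}^2 = 1$. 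Hence $\norm{\mathcal{F}(\boldf)}^2 = \sum_i \abs{\widehat{\boldf}_i}^2 = \norm{\bV_2^H\boldf}^2 = \norm{\boldf}^2$, where the last equality uses that $\bV_2$ is unitary.

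The steps are computational rather than structurally difficult; the only place calling for care is keeping the two index conventions straight, namely that $i$ always labels the $\bV_2$-eigenvector and $j$ the $\bV_1$-eigenvector, and that the matrix form of $\mathcal{F}(\boldf)$ in (a) stores $\mathcal{F}(\boldf)_{i,j}$ at the transposed matrix position. Once this is pinned down, both identities fall out in a few lines.
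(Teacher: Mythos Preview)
Your proposal is correct and follows essentially the same approach as the paper: the paper dismisses (a) as ``follows directly from the definition'' and carries out exactly the same entrywise computation for (b) that you give. Your write-up is in fact more careful than the paper's, since you explicitly flag the transposed index convention in the matrix formula for $\mathcal{F}(\boldf)$, a point the paper leaves implicit.
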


\begin{proof}
(a) follows directly from the definition. For (b), to show $\norm{\boldf}^2 = \norm{\mathcal{F}(\boldf)}^2$, we compute
    \begin{align*}
        &\norm{\mathcal{F}(\boldf)}^2  = \sum_{1\leq j\leq n}\sum_{1\leq i\leq n} |\widehat{\boldf}_ip_{i,j}|^2 \\
       = &  \sum_{1\leq i\leq n}|\widehat{\boldf}_i|^2\sum_{1\leq j\leq n} |p_{i,j}|^2  = \sum_{1\leq i\leq n}|\widehat{\boldf}_i|^2\norm{\bv_{2,i}}^2 \\
       = &  \sum_{1\leq i\leq n}|\widehat{\boldf}_i|^2 = \norm{\boldf}^2.
    \end{align*}
\end{proof}

The proposed framework recovers that of traditional GSP. To see this, suppose the GSO $\bS$ is normal, e.g., symmetric or orthogonal. In this case, both $\bV_1$ and $\bV_2$ can be chosen to be equal. This implies that $p_{i,j} = \delta_{i,j} = 1$ if $i=j$ and $0$ otherwise. Therefore, $\mathcal{F}(\boldf)_{i,j} = \widehat{\boldf}_i$ if $i=j$ and $0$ otherwise. This is the reason why for undirected graphs or directed cycle graphs, we only need $n$ (diagonal) components in GSP. 

\begin{Example}
In this example, we show sample spectral plots (in \figref{fig:sp}) of the same signal on different directed graphs. Let $G_0, \ldots, G_5$ be directed graphs on $50$ nodes. For the extreme cases, $G_0$ is the undirected cycle graph and $G_5$ is the directed cycle graph. Intermediately, for $0< k <5$, $G_k$ is obtained from $G_0$ by making randomly chosen $10k$ edges directed according to the ordering of the nodes. Shift operators are the (directed) Laplacians $\bL_k, 0\leq k\leq 5$. We choose a bandlimited signal $\boldf$ w.r.t.\ $\bL_0$. The spectral plots of the componentwise absolute values of its Fourier transforms w.r.t.\ $\bL_k, 0\leq i\leq 5$ are shown in \figref{fig:sp}. For $\bL_0$, the result is the same as that traditional GSP offers (placed along the diagonal). If we let $k$ increase, we see that the patterns change gradually. We shall see that this supports the result on graph perturbation in \cref{sec:grp} below. If the graph differs more from $G_0$, then the spectrum of $\boldf$ is more spread out, suggesting the usefulness of having a lossless signal processing framework.  
    \begin{figure}
        \centering
        \includegraphics[scale=0.4, trim={4.5cm 1.5cm 1.5cm 0.5cm},clip]{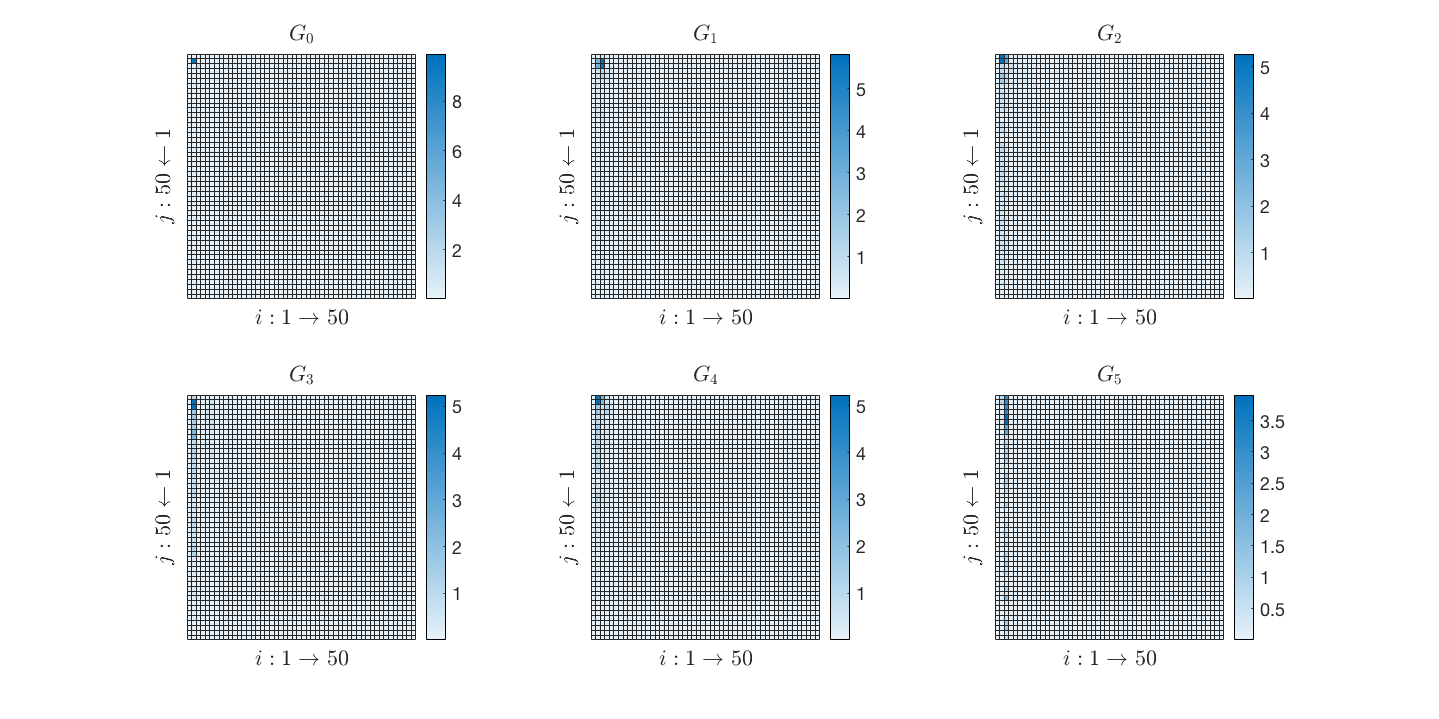}
        \caption{Spectral plots of $\boldf$ w.r.t.\ $\bL_i, 0\leq i\leq 5$.}
        \label{fig:sp}
    \end{figure}
\end{Example}

\section{Convolution} \label{sec:con}
Important tools in GSP are the convolutions \cite{Shu13, Def16, Jif23}. One way to interpret the convolution is to consider it as a polynomial in the GSO $\bS$. We can do so for GSP for the particular reason that $\bS = \bW\bSigma\bW^{\T}$ is diagonalizable and a monomial takes the form $\bS^m = \bW\bSigma^m\bW^{\T}$. Therefore, for the expression $\bSigma^m$, we essentially take entrywise multiplication in the frequency domain. However, this does not hold for an $\bS$ from a directed graph, if it is not diagonalizable. Therefore, we may consider the interpretation that convolution is a multiplication in the frequency domain \cite{Jif23}. For this, we use the identification of the frequency domain with $\mathbb{C}^{n^2}$. More precisely, we define:
\begin{Definition}
    For $\bg,\bh \in \mathbb{C}^{n^2}$, let $\bg \odot \bh \in \mathbb{C}^{n^2}$ be the entrywise multiplication, i.e., the Hadamard product. Moreover, defined $\bh^k = \bh\odot\cdots\odot \bh$ to be the $k$-fold self multiplication. For any $\bh \in \mathbb{C}^{n^2}$, it defines a \emph{convolution} $\bh*: \mathbb{R}^n \to \mathbb{C}^n$ by 
    \begin{align*}
        \bh*\boldf = \mathcal{I}\Big(\widetilde{\boldf} \odot \bh\Big). 
    \end{align*}
\end{Definition}

For the polar decomposition of $\bS$, let the diagonal entries of $\bLambda_1$ (resp.\ $\bLambda_2$) be $(\lambda_{1,i})_{1\leq i\leq n}$ (resp. $(\lambda_{2,i})_{1\leq i\leq n}$).

\begin{Lemma} \label{lem:lbb}
    Let $\bh_{\bS} \in \mathbb{C}^{n^2}$ be defined by $\bh_{\bS,i,j} = \lambda_{2,i}\lambda_{1,j}$. Then $\bh_{\bS}*\boldf = \bS(\boldf)$. 
\end{Lemma}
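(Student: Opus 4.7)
The plan is to unwind the definitions on both sides and match them. Starting from the \gls{RHS}, write $\bS(\boldf) = \bU\bP(\boldf)$ and expand $\boldf$ in the eigenbasis of $\bP$ using~(\ref{eq:bsw}), so that
\[
\bS(\boldf) \;=\; \bU\bP\Big(\sum_{i} \widehat{\boldf}_i \bv_{2,i}\Big) \;=\; \sum_{i} \widehat{\boldf}_i\, \lambda_{2,i}\, \bU(\bv_{2,i}),
\]
using $\bP \bv_{2,i} = \lambda_{2,i}\bv_{2,i}$. Then apply~(\ref{eq:bsa}) together with $\bU \bv_{1,j} = \lambda_{1,j} \bv_{1,j}$ to obtain the expansion $\bU(\bv_{2,i}) = \sum_{j} p_{i,j}\lambda_{1,j}\bv_{1,j}$.

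Next, I would compute the \gls{LHS} directly from the definition of convolution. Using \cref{lem:mbb}(a) for $\mathcal{I}$, or equivalently the summation form of $\mathcal{I}$ in \cref{defn:tft}, we have
\[
\bh_{\bS} * \boldf \;=\; \mathcal{I}\big(\widetilde{\boldf}\odot \bh_{\bS}\big) \;=\; \sum_{j}\sum_{i} \widetilde{\boldf}_{i,j}\, h_{\bS,i,j}\, \bv_{1,j} \;=\; \sum_{j}\sum_{i} \widehat{\boldf}_i\, p_{i,j}\, \lambda_{2,i}\, \lambda_{1,j}\, \bv_{1,j}.
\]
Swapping the order of summation and pulling out the factor depending only on $i$ gives
\[
\bh_{\bS} * \boldf \;=\; \sum_{i} \widehat{\boldf}_i\, \lambda_{2,i} \Big(\sum_{j} p_{i,j}\lambda_{1,j}\bv_{1,j}\Big).
\]

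Comparing the two expressions, they coincide term by term via the identification $\sum_{j} p_{i,j}\lambda_{1,j}\bv_{1,j} = \bU(\bv_{2,i})$ established above, which finishes the proof. I do not expect any real obstacle: the computation is essentially the verification that the Fourier transform on the product spectral domain diagonalises $\bS$ with eigenvalues $\lambda_{2,i}\lambda_{1,j}$. The only mildly delicate point is keeping straight the roles of the two bases $\bV_1$ and $\bV_2$, specifically that the ``outer'' sum over $j$ is the one expanded in $\bV_1$ (matching the definition of $\mathcal{I}$), while the ``inner'' index $i$ carries the $\bP$-eigenvalue; once this bookkeeping is correct, everything collapses to $\bU\bP\boldf$.
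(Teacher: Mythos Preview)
Your proof is correct and follows essentially the same approach as the paper: both unwind the definition of $\bh_{\bS}*\boldf$ into the double sum $\sum_{j}\sum_{i}\widehat{\boldf}_i\,p_{i,j}\,\lambda_{2,i}\,\lambda_{1,j}\,\bv_{1,j}$ and then recognize the result as $\bU\bP\boldf$ via the eigenvalue relations for $\bP$ and $\bU$. The only cosmetic difference is that the paper works purely from the \gls{LHS} towards $\bS(\boldf)$ (identifying $\sum_i \widehat{\boldf}_i\lambda_{2,i}\bv_{2,i}=\bP\boldf$ and then $\sum_j \lambda_{1,j}\langle\bv_{1,j},\bP\boldf\rangle\bv_{1,j}=\bU\bP\boldf$), whereas you compute both sides separately and match them in the middle.
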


\begin{proof}
From the definition, we directly evaluate $\bh_{\bS}*\boldf$ as
\begin{align*}
    & \bh_{\bS}*\boldf = \sum_{1\leq j\leq n}\Big(\sum_{1\leq i\leq n}\widehat{\boldf}_ip_{i,j}\lambda_{2,i}\lambda_{1,j}\Big)\bv_{1,j} \\
    =& \sum_{1\leq j\leq n}\lambda_{1,j}\Big(\sum_{1\leq i\leq n}\widehat{\boldf}_ip_{i,j}\lambda_{2,i}\Big)\bv_{1,j} \\
    =& \sum_{1\leq j\leq n}\lambda_{1,j}\langle \bv_{1,j},\sum_{1\leq i\leq n}\widehat{\boldf}_i\lambda_{2,i}\bv_{2,i}\rangle\bv_{1,j} \\
    =& \sum_{1\leq j\leq n}\lambda_{1,j}\langle \bv_{1,j}, \bV_2\bLambda_2\bV_2^H\boldf\rangle\bv_{1,j} \\
    =& \sum_{1\leq j\leq n}\lambda_{1,j}\langle \bv_{1,j},\bP(\boldf)\rangle\bv_{1,j} \\
    =&  \bU\big(\bP(\boldf)\big) = \bS(\boldf). 
\end{align*}
\end{proof}

As $\bU$ is unitary and $\bP$ is positive semi-definite, $\lambda_{2,i}$ is always non-negative and it can be interpreted as the radius, while $\lambda_{1,j}$ is of the form $e^{\bi\theta}$. Hence, $\lambda_{2,i}\lambda_{1,j}$ is the polar form of the complex number $\bh_{i,j}$. As a consequence of \cref{lem:lbb}, if $\bS$ is normal, then $\bh_{\bS}^k*$ is just the monomial convolution filter $\bS^k$. 

We have the freedom to choose $\bh$ to construct different convolution filters. For example, if $\bh$ is a characteristic signal valued in $\{0,1\}$, then we have a bandpass filter. Sampling theory can be developed analogously to \cite{Ji22}. For another construction, consider any function $f: \mathbb{C} \to \mathbb{C}$. We use $f(\bh_{\bS})$ to denote the signal in $\mathbb{C}^{n^2}$ whose $(i,j)$-th entry is $f(\bh_{\bS,i,j}) = h(\lambda_{2,i}\lambda_{1,j})$. Examples of $f$ include polynomials or functions of the form $1/(1+cx)$ (cf.\ \cref{sec:nur}).

\section{Graph perturbation} \label{sec:grp}
In this section, we discuss graph perturbation, which is used in \cite{9325908} to modify a directed graph to obtain a diagonalizable shift operator. Many procedures are considered as a perturbation. For example, adding or removing a small percentage of directed edges is considered as a perturbation. Another example is a small change in the edge weights if the graph is weighted. Though our framework is lossless without modifying the graph structure, we still want to understand how such perturbations affect the Fourier transform and filtering.

In this section, we use $\norm{\cdot}$ to denote the matrix operator norm, and $\norm{\cdot}_F$ to denote the matrix Frobenius norm, the same as the $2$-norm. All norms are equivalent for finite dimensional spaces. For example, for any matrix $\bM$ of rank $r$, we have $\norm{\bM} \leq \norm{\bM}_F \leq \sqrt{r}\norm{\bM}$. 

For another shift operator $\bS'$, denote $\bS'-\bS$ by $\delta \bS'$, as the perturbation matrix. Usually, we are interested in small perturbation, which means $\norm{\delta\bS'}$ is small. In this subsection, we assume that $\bS$ and $\bS'$ are invertible and \emph{non-derogatory}, i.e., both factors in the polar decomposition do not have repeated eigenvalues. If $\norm{\delta\bS'}$ is sufficiently small, then $\bS$ being invertible and non-derogatory implies the same holds for $\bS'$. 

To state the following observation, let $P_m(x,y)$ be the polynomial $(x+y)^m-y^m$. We notice that each of its monomial terms has at least one $x$ factor.

\begin{Lemma} \label{lem:sso}
Suppose sequences of matrices $\bM_1,\ldots,\bM_m$ and $\bM_1',\ldots,\bM_m'$ satisfy: for $1\leq i\leq m$, matrices $\bM_i, \bM_i'$ are of the same size and $\bM = \prod_{1\leq i\leq m} \bM_i$, $\bM' = \prod_{1\leq i\leq m} \bM_i'$ are well-defined. Let $\delta \bM_i' = \bM_i' -\bM_i$. If there are $x,y>0$ such that $\norm{\delta \bM_i'} \leq x, \norm{\bM_i} \leq y, 1\leq i\leq m$, then $\norm{\bM'-\bM} \leq P_m(x,y)$.
\end{Lemma}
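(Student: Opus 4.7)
The plan is to expand $\bM' = \prod_{i=1}^m (\bM_i + \delta \bM_i')$ by distributivity and recognize that $\bM'-\bM$ is a sum of product-terms, each of which contains at least one $\delta \bM_i'$ factor. Submultiplicativity of the operator norm will then control each term, and a binomial count will produce exactly $P_m(x,y)$.

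Concretely, after distributing I obtain $2^m$ summands indexed by subsets $S \subseteq \{1,\dots,m\}$, where the summand $\bT_S = \prod_{i=1}^m \bA_i^{(S)}$ is defined by $\bA_i^{(S)} = \delta \bM_i'$ if $i \in S$ and $\bA_i^{(S)} = \bM_i$ otherwise (the product is taken in the fixed order $i=1,\dots,m$). The summand $\bT_\emptyset$ equals $\bM$, so $\bM' - \bM = \sum_{S \neq \emptyset} \bT_S$. By submultiplicativity and the given bounds, $\norm{\bT_S} \le x^{|S|} y^{m-|S|}$. Applying the triangle inequality and grouping summands by $|S|=k$ yields
\begin{align*}
\norm{\bM' - \bM} \le \sum_{k=1}^m \binom{m}{k} x^k y^{m-k} = (x+y)^m - y^m = P_m(x,y),
\end{align*}
where the last equality is the binomial theorem with the $k=0$ term removed, matching the description of $P_m$ in the statement.

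An equivalent route, which I would mention as an alternative or fall-back, is induction on $m$: the base case $m=1$ is just $\norm{\delta\bM_1'} \le x = P_1(x,y)$; for the inductive step, write
\begin{align*}
\bM' - \bM = \delta \bM_1' \cdot \prod_{i=2}^m \bM_i' + \bM_1 \cdot \Big(\prod_{i=2}^m \bM_i' - \prod_{i=2}^m \bM_i\Big),
\end{align*}
bound $\norm{\prod_{i=2}^m \bM_i'} \le (x+y)^{m-1}$ from $\norm{\bM_i'} \le x+y$, and apply the inductive hypothesis to the second term, then verify the algebraic identity $x(x+y)^{m-1} + y\,P_{m-1}(x,y) = P_m(x,y)$.

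There is no substantive obstacle; the only things to be careful about are keeping the factors in the prescribed order (since the matrices need not commute) and ensuring the count of terms with exactly $k$ perturbation factors is $\binom{m}{k}$. I would use the expansion argument as the primary proof because it makes the $P_m(x,y)$ form visibly a truncated binomial expansion, directly explaining the author's remark that every monomial of $P_m$ contains at least one $x$ factor.
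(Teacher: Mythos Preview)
Your proposal is correct and follows essentially the same approach as the paper: expand $\prod_i(\bM_i+\delta\bM_i')$, cancel the $\bM$ term, and bound each remaining product by submultiplicativity and the triangle inequality to recover the truncated binomial $P_m(x,y)$. The paper's proof is slightly terser (it does not spell out the $\binom{m}{k}$ count or the induction alternative), but the argument is identical in substance.
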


\begin{proof}
    Notice that $\norm{\bM' - \bM} = \norm{\prod_{1\leq i\leq m}(\bM_i+\delta \bM_i') - \prod_{1\leq i\leq m}\bM_i}$. We expand the product $\prod_{1\leq i\leq m}(\bM_i+\delta \bM_i')$ and apply the triangle inequality. The norm $\norm{\bM' - \bM}$ is upper bounded by a sum of the form $\norm{\prod_{1\leq i\leq m}\bN_i}$ with $\bN_i$ either $\bM_i$ or $\delta \bM_i'$, and at least one of them is $\delta \bM_i'$. Then as $\norm{\cdot}$ is submultiplicative, we obtain the desired an upper bound $P_m(x)$ by replacing $\bM_i$ with $y$ and $\delta \bM_i'$ by $x$.
\end{proof}

The Fourier transform \cref{defn:tft} and convolutions are linear transformations, and hence it makes sense to talk about their operator norms and Frobenius norms.  

\begin{Theorem} \label{thm:tac}
There are constants $\epsilon$ and $C$ independent of $\delta \bS'$ such that if $\norm{\delta \bS'}_F \leq \epsilon$, then the following holds
\begin{enumerate}[(a)]
    \item $\norm{\mathcal{F}_{\bS}-\mathcal{F}_{\bS'}} \leq P_3(C\norm{\delta\bS'}_F,1)$.
    \item $
        \norm{\bh_{\bS}^k*-\bh_{\bS'}^k*} \leq P_{2k+4}(C\norm{\delta\bS'}_F,\max(1,\lambda_{2,n}))$.
\end{enumerate}
\end{Theorem}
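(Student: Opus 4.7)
For both parts, the plan is to write the linear map of interest as a product of factors that are Lipschitz in $\bS$, count the factors, and invoke \cref{lem:sso}. The genuine work is Step~1 below; the rest is bookkeeping.

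\emph{Step 1 (stability of the polar and spectral data).} Since $\bS$ is invertible, the polar factors depend smoothly on $\bS$, so there is a constant $C_1$ with $\norm{\bU'-\bU}_F,\,\norm{\bP'-\bP}_F \leq C_1\norm{\delta\bS'}_F$. Since $\bU$ and $\bP$ are non-derogatory and normal, their spectra are simple and the minimum spectral gap persists under sufficiently small perturbations. Then Bauer--Fike for eigenvalues and Davis--Kahan (equivalently Kato's analytic perturbation theory) for eigenvectors give a single Lipschitz constant $C \geq C_1$ and a threshold $\epsilon > 0$ such that $\norm{\delta\bS'}_F \leq \epsilon$ forces $\norm{\bV_i'-\bV_i}_F,\,\norm{\bLambda_i'-\bLambda_i}_F \leq C\norm{\delta\bS'}_F$ for $i=1,2$. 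The non-derogatory and invertibility hypotheses are precisely what supply a uniform spectral gap near $\bS$, preventing $C$ from blowing up.

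\emph{Step 2 (part (a)).} By \cref{lem:mbb}(a), $\mathcal{F}_{\bS}(\boldf) = \bV_1^H\bV_2 D(\bV_2^H\boldf)$ depends trilinearly on the three unitary factors $\bV_1^H,\bV_2,\bV_2^H$ (each of operator norm $1$); the $D$-embedding is a norm-$1$ map that does not involve $\bS$ (cf.\ the proof of \cref{lem:mbb}(b)). Expanding $(\bV_1^H+\delta\bV_1^H)(\bV_2+\delta\bV_2)D((\bV_2+\delta\bV_2)^H\boldf)$ and collecting the seven ``at least one $\delta$'' terms is precisely the expansion underlying \cref{lem:sso} with $m=3$, $y=1$, and $x=C\norm{\delta\bS'}_F$ (from Step~1); this produces $\norm{\mathcal{F}_{\bS}-\mathcal{F}_{\bS'}} \leq P_3(C\norm{\delta\bS'}_F,1)$.

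\emph{Step 3 (part (b)).} Iterating the computation in the proof of \cref{lem:lbb} shows $\bh_{\bS}^k*\boldf = \bU^k\bP^k\boldf$, so as linear operators $\bh_{\bS}^k* = \bV_1\bLambda_1^k\bV_1^H\bV_2\bLambda_2^k\bV_2^H$. Expanding each $\bLambda_i^k$ into $k$ copies of $\bLambda_i$ presents this as a matrix product of $2k+4$ factors: four $\bV$-factors (unitary, norm $1$) together with $2k$ diagonal eigenvalue factors (each of norm at most $\max(1,\lambda_{2,n})$, since $\bU$ is unitary and $\norm{\bP}=\lambda_{2,n}$). Step~1 gives each factor a perturbation of at most $C\norm{\delta\bS'}_F$, so \cref{lem:sso} with $m=2k+4$, $y=\max(1,\lambda_{2,n})$, and $x=C\norm{\delta\bS'}_F$ delivers (b). The main obstacle throughout remains Step~1; if the uniform spectral gap were lost, the eigenbases could change discontinuously and both bounds would fail.
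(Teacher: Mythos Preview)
Your argument is correct and follows the paper's proof closely: Step~1 matches the paper's use of polar-factor stability plus eigenvalue/eigenvector perturbation (the paper cites Mirsky, Hoffman--Wielandt, and Davis--Kahan where you invoke Bauer--Fike and Davis--Kahan), and Step~2 is identical. The only variation is in Step~3, where the paper decomposes the full $\mathcal{F}_{\bS}$--Hadamard--$\mathcal{I}_{\bS}$ pipeline into $2k+4$ pieces (the Hadamard product with $\bh_{\bS}^k$ becoming $k$ row-scalings and $k$ column-scalings on $\mathbb{C}^{n^2}$), while you first compute $\bh_{\bS}^k*\boldf = \bU^k\bP^k\boldf$ and factor $\bV_1\bLambda_1^k\bV_1^H\bV_2\bLambda_2^k\bV_2^H$ directly; both routes produce the same $2k+4$ factors with the same norm and perturbation bounds, and the same invocation of \cref{lem:sso}.
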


\begin{proof}
Let $\bV_i', \bLambda_i',i=1,2$ be the unitary and diagonal factors resulting from the polar decomposition of $\bS'$. Denote $\delta\bV_i' = \bV_i'-\bV_i, \delta\bLambda_i' = \bLambda_i'-\bLambda_i, i=1,2$. We first claim that there is a constant $C$ independent of $\bS'$ and $n$ such that for $i=1,2$, then the following holds
\begin{enumerate}[(i)]
    \item $\norm{\delta\bV_i'}_F \leq C\norm{\delta\bS'}_F$, and
    \item $\norm{\delta\bLambda_i'}_F \leq C\norm{\delta\bS'}_F$.
\end{enumerate}
The inequalities follow from the matrix perturbation theory as follows. Let $\bS' = \bU'\bP'$ be the unique polar decomposition of the invertible matrix $\bS'$ and write $\delta \bU' = \bU'-\bU, \delta \bP' = \bP'-\bP$. Then by \cite[Theorem~1]{Li95} and \cite{Che89}, we have $\norm{\delta \bU'}_F \leq c_0\norm{\delta\bS'}_F$ and $\norm{\delta \bP'}_F \leq \sqrt{2}\norm{\delta\bS'}_F$, where $c_0$ depends only on the singular values of $\bS$. 

By Mirsky's theorem (\cite[Theorem~2]{Ste90}), we have $\norm{\delta \bLambda_2'}_F \leq \norm{\delta\bP'}_F$. On the other hand, by the Hoffman-Wielandt theorem (\cite[VI.34]{Bha96}) for normal matrices, there is a permutation of $\sigma$ of indices $\{1,\ldots,n\}$ such that 
\begin{align*}
    \sum_{1\leq i\leq n}|\lambda_{1,i}-\lambda_{1,\sigma(i)}'|^2 \leq \norm{\delta \bU'}_F^2 \leq c_0^2\norm{\delta\bS'}_F^2 \leq c_0^2\epsilon^2.
\end{align*}
As all $\lambda_{1,i},1\leq i\leq n$ are distinct by the non-derogatory condition, if $\epsilon$ is sufficiently small, then the closet eigenvalue to $\lambda_{1,i}$ is $\lambda'_{1,i}$. Hence, $\sigma$ is the identity permutation and we have $\norm{\delta \bLambda_1'}_F \leq \norm{\delta\bU'}_F \leq c_0\epsilon$. Therefore, together with the assumption that both $\bU$ and $\bP$ are non-deregatory, we may choose $\epsilon$ small enough such that there is a $\rho>0$ (depending only on $\epsilon$) such that $|\lambda_{1,i}-\lambda_{1,j}'| \geq \rho, |\lambda_{2,i}-\lambda_{2,j}'| \geq \rho$ for any $1\leq i\neq j \leq n$. 

By the Davis-Kahan theorem (\cite[VII.3]{Bha96}) and its generalization (for normal matrices) in \cite[Corollary~5]{Bha22}, there is a constant $c_1$ depending only on $\rho$ such that $\norm{\delta \bV_1'}_F \leq c_1\norm{\delta\bU'}_F$, $\norm{\delta \bV_2'}_F \leq c_1\norm{\delta\bP'}_F$. Setting $C = \max(c_0,\sqrt{2})c_1$, we obtain the claimed inequalities. From the proof, we see that $C$ depends only on the singular values of $\bS$ and $\epsilon$, which depends only on the eigenvalues of $\bU$ and $\bP$. This proves the claim. 

To prove (a) and (b), we apply \cref{lem:sso} using the claim above. For (a), recall that by \cref{lem:mbb}, we have $\mathcal{F}_{\bS}(\boldf) = \bV_1^H\bV_2D(\bV_2^H\boldf)$. Notice that $D(\bV_2^H\boldf)$ is a linear operator with a $n^2$-dimensional codomain. Its norm is $1$ and satisfies the same perturbation bound as $\bV_2^H$. Therefore (a) follows immediately from \cref{lem:sso}. For (b), we observe that $\bh_{\bS}^k*$ can be decomposed into $2k+4$ factors: 
\begin{itemize}
    \item $D(\bV_2^H\boldf)$, $\bV_2$, $\bV_1^H$ from $\mathcal{F}_{\bS}$;
    \item $\bV_1$ from $\mathcal{I}_{\bS}$;
    \item $k$-copies of the transformation $\mathbb{C}^{n^2} \to \mathbb{C}^{n^2}$ that multiplies the $j$-th column by $\lambda_{1,j}, 1\leq j\leq n$.
    \item $k$-copies of the transformation $\mathbb{C}^{n^2} \to \mathbb{C}^{n^2}$ that multiplies the $i$-th row by $\lambda_{2,i}, 1\leq i\leq n$.
\end{itemize}
Each of these factors is bounded in norm by $\max(1,\lambda_{2,n})$ and its perturbation bound is $C\norm{\delta\bS'}_F$ by the above claims. Therefore, (b) of the theorem follows from \cref{lem:sso}. 
\end{proof}

Intuitively, the result claims that if we approximate $\bS$ by a normal $\bS'$ such that traditional GSP tools can be applied, then the framework developed in this paper ``almost'' agrees with the traditional GSP. 

\section{Miscellaneous} \label{sec:mis}
This section contains a few miscellaneous topics. 

\subsection{Different choices of eigenbasis}
The polar decomposition $\bS = \bU\bP$ may not be unique. Moreover, even for fixed $\bU$ or $\bP$, they may have repeated eigenvalues, thus violating the non-derogatory condition of \cref{thm:tac}. As a consequence, there is an ambiguity in the matrices $\bV_1, \bV_2$ needed to define the Fourier transform \cref{defn:tft}. To (partially) resolve this issue, we propose an optimization framework. 

Let $d(\cdot,\cdot)$ be a metric on the space of $n\times n$ unitary matrices $\calU_n$ such as the metric induced by the operator norm. For a polar decomposition $\bS = \bU\bP$, we propose to construct the unitary matrices of eigenvectors $\bV_1,\bV_2$ by solving the following optimization
\begin{align} \label{eq:mbv}
    \min_{\bV_1,\bV_2 \in \calU_n} d(\bV_1,\bV_2), \text{ provided } \bV_1,\bV_2
 \text{ form eigenbasis of } \bU,\bP. \end{align}

An important example is when $\bS$ is the Laplacian of a connected undirected graph $G$. In this case, by convention, we apply the canonical polar decomposition $\bS = \bI\bS$, where $\bI$ is the identity matrix. Solving the above optimization results in $\bV_1 = \bV_2$ and $d(\bV_1,\bV_2)=0$. This agrees with what is being considered in GSP. Suppose $\bS'$ is the Laplacian of connected undirected $G'$ that is a small perturbation of $G$. Then we still have a constant $C$ independent of $G'$ such that $\norm{\mathcal{F}_{\bS}-\mathcal{F}_{\bS'}} \leq C\norm{\delta\bS'}_F$ as long as $\bS$ does not have repeated eigenvalues by \cref{thm:tac}.  Similarly, if $\bS$ is the adjacency matrix of the directed cycle graph, then for the canonical polar decomposition $\bS = \bS\bI$, we again have $\bV_1=\bV_2$ by solving (\ref{eq:mbv}) with $d(\bV_1,\bV_2)=0$.  

More generally, the same discussion works if $\bS$ is \emph{normal}, in which case $\bU, \bP$ are simultaneously diagonalizable, and solving (\ref{eq:mbv}) also yields $\bV_1 = \bV_2$ with $d(\bV_1,\bV_2)=0$. 

\subsection{An overall picture}

For simplicity, our discussion in \cref{sec:grp} focuses on the space $\mathcal{S}$ of all invertible and non-derogatory shift operators, which can be weighted. The space $\mathcal{S}$ is endued naturally with the metric $d_{\calS}(\cdot,\cdot)$ induced by the operator norm $\norm{\cdot}$. In this paper, we have defined for each $\bS \in \mathcal{S}$ the Fourier transform operator $\mathcal{F}_{\bS}$. Moreover, for each $k$, we have the convolution operator $\bh_{\bS}^k*$. 

The traditional GSP theory is incomplete by defining GFT (and convolutions) on the subspace of $\mathcal{S}$ consisting of symmetric shift operators. In this paper, we have extended the constructions to all of $\mathcal{S}$, which subsumes GSP as a special case. Moreover, by \cref{thm:tac}, the extension of the traditional theory is in a continuous manner. We call our framework \emph{directed GSP} (DGSP). A comparison with GSP is summarized in \cref{tab:cbt}. 

\begin{table}
\caption{Comparison between traditional GSP and the proposed approach ``DGSP''} \label{tab:cbt}
\begin{center}
\begin{tabular}{ |c|c|c| } 
 \hline
  & GSP & DGSP \\ 
 \hline
 \hline
 Applicability & Symmetric matrices & $\mathcal{S}$ \\ 
 \hline
 Frequency domain  & $\mathbb{R}^n$ & $\mathbb{C}^{n^2}$ \\ 
 \hline
 Fourier transform & $\bV^{\T}\boldf$ & $\bV_1^H\bV_2D(\bV_2^H\boldf)$ \\ 
 \hline
 Inverse transform & $\bV\boldf$ & $\bV_1\bM\bc$ \\ 
 \hline
 Convolution & Polynomial in $\bS$ & Not polynomial in $\bS$ \\
 \hline
 Relation to GSP & $--$  & GSP if $\bS$ is normal \\
 \hline
\end{tabular}
\end{center}
\end{table}

\subsection{General matrix decomposition} \label{sec:gmd}
The Fourier transform (\ref{eq:mfb}) has the element of the chain rule in calculus. More precisely, for a signal $\boldf$, the polar decomposition trivially implies that $\bS\boldf = \bU(\bP\boldf)$. In the expression $\langle \bv_{2,i},\boldf \rangle \langle \bv_{1,j},\bv_{2,i} \rangle$, we view $\langle \bv_{2,i},\boldf \rangle$ as the ``partial derivative of $\boldf$ w.r.t.\ $\bv_{2,i}$, and the factor $\langle \bv_{1,j},\bv_{2,i} \rangle$ as the ``partial derivative'' of $\bv_{2,i}$ w.r.t.\ $\bv_{1,j}$. This gives us hints on how to generalize the framework of the paper.

Instead of considering the polar decomposition $\bS = \bU\bP$, let 
\begin{align}
\bS = \bM_1\cdots \bM_k, k>1
\end{align}
be any decomposition of $\bS$ into a product of normal matrices $\bM_i, 1\leq i\leq k$. Assume that for each $\bM_i$, there is an eigendecomposition $\bM_i = \bV_i\bLambda_i\bV_1^H$ such that $\bV_i$ is unitary and $\bLambda_i$ is the diagonal matrix of eigenvalues. Let the columns of $\bV_j$ be $\{\bv_{i,j}, 1\leq j\leq n\}$. Then the Fourier transform of a signal $\boldf$ is a map $\mathbb{R}^n \to \mathbb{C}^{n^k}$ defined by
\begin{align*}
    \boldf \mapsto \Big(\prod_{1\leq i\leq k-1} \langle \bv_{i,j_i}, \bv_{i+1,j_{i+1}}\rangle\Big)\langle \bv_{k,j_k}, \boldf\rangle, 1\leq j_1,\ldots,j_k \leq n,
\end{align*}
as a generalization of (\ref{eq:mfb}) in \cref{defn:tft}. 

If the eigendecompositions of $\bM_i$'s are non-unique, as in the case of the polar decomposition, we may jointly estimate $\bV_i, 1\leq i\leq k$ by solving the following optimization taking the form analogous to (\ref{eq:mbv}):
\begin{align*}
    \min_{\bV_1,\ldots,\bV_k} \sum_{1\leq i\leq k-1} d(\bV_i, \bV_{i+1}), \text{ provided } \bV_i \text{ forms an eigenbasis of } \bM_i. 
\end{align*}

We notice that the codomain of the Fourier transform is $n^k$ dimensional. Therefore, it will be intractable if $k$ is large. This means that it is only reasonable to consider only the matrix decomposition of a small number of factors. Examples of such a matrix decomposition include the reverse polar decomposition, the singular value decomposition (SVD) \cite{Tao12}, and the Sinkhorn decomposition involving doubly stochastic matrix \cite{Sin64}.  

\section{Numerical results} \label{sec:nur}
 
In this section, we consider the heat flow dataset of the Intel Berkeley Research lab.\footnote{http://db.csail.mit.edu/labdata/labdata.html} Temperature data are collected from $53$ sensors, denoted by $V$, placed in the lab. We use the setup of \cite{Furutani2019GraphSP} and construct a planar graph $G$ with an edge set $E$ of size $87$. Each edge is given a direction such that heat flows from high temperature to low temperature. A signal $\boldf$ on $G$ consists of temperature readings on the sensors $V$. 

We follow \cite[Section~4.2]{Furutani2019GraphSP} and perform the signal denoising task. For a temperature signal $\boldf$, we add a noise $\bn$ whose components are independent Gaussians with mean $\mu=0$ and standard deviation $\sigma$. The task is to recover $\boldf$, while observing $\boldf+\bn$. In \figref{fig:tss}, we show the spectral plots of a true signal $\boldf$ and a noise $\bn$. We see that the spectral of $\boldf$ is mostly supported for small frequency indices as expected. On the other hand, high magnitude components of the spectral coefficients $\bn$ are spread over the entire frequency domain. This suggests a low-pass filter can be used for denoising. As in \cite{Furutani2019GraphSP}, we consider a convolution filter $f(\bh_{\bS})$ (\cref{sec:con}) with $f(x) = 1/(1+cx)$ for a tunable hyperparameter $c$.

\begin{figure}
    \centering
    \includegraphics[scale=0.4, trim={1cm 0.5cm 0.8cm 0},clip]{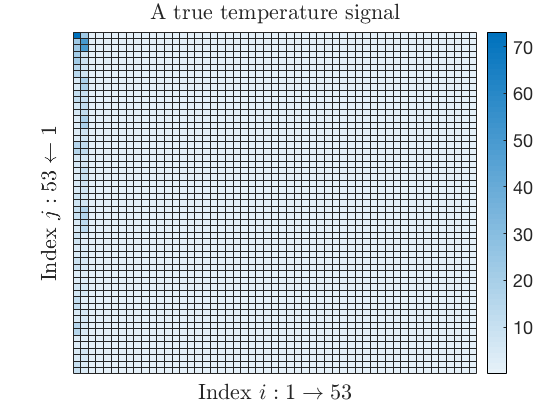}
    \includegraphics[scale=0.4,trim={1cm 0.5cm 0.8cm 0},clip]{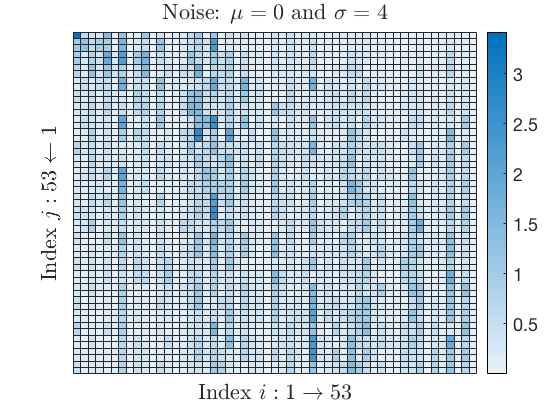}
    \caption{The sample spectral plots of a temperature signal in the lab and a noise signal with $\sigma=4$. The magnitudes of the spectral coefficients are shown.}
    \label{fig:tss}
\end{figure}

Let $\boldf'$ be the recovered signal. We evaluate the performance by computing the \emph{root-mean-squared-error} (RMSE) $\norm{Re(\boldf')-\boldf}_2/\sqrt{53}$ over all the sensors. Here $Re(\boldf')$ is the componentwise real part of $\boldf'$. We call our approach ``DGSP''. We compare with the method ``Hermitian'' of \cite{Furutani2019GraphSP}. The boxplots of RMSEs over $1000$ runs and $\sigma = 1,\ldots, 8$ are shown in \figref{fig:tro}. For reference, we also show the RMSE between $\boldf+\bn$ and $\boldf$, called ``Base''. We see that our lossless directed graph signal processing approach ``DGSP'' performs better for all $\sigma$. 

\begin{figure}
    \centering
    \includegraphics[scale=0.3]{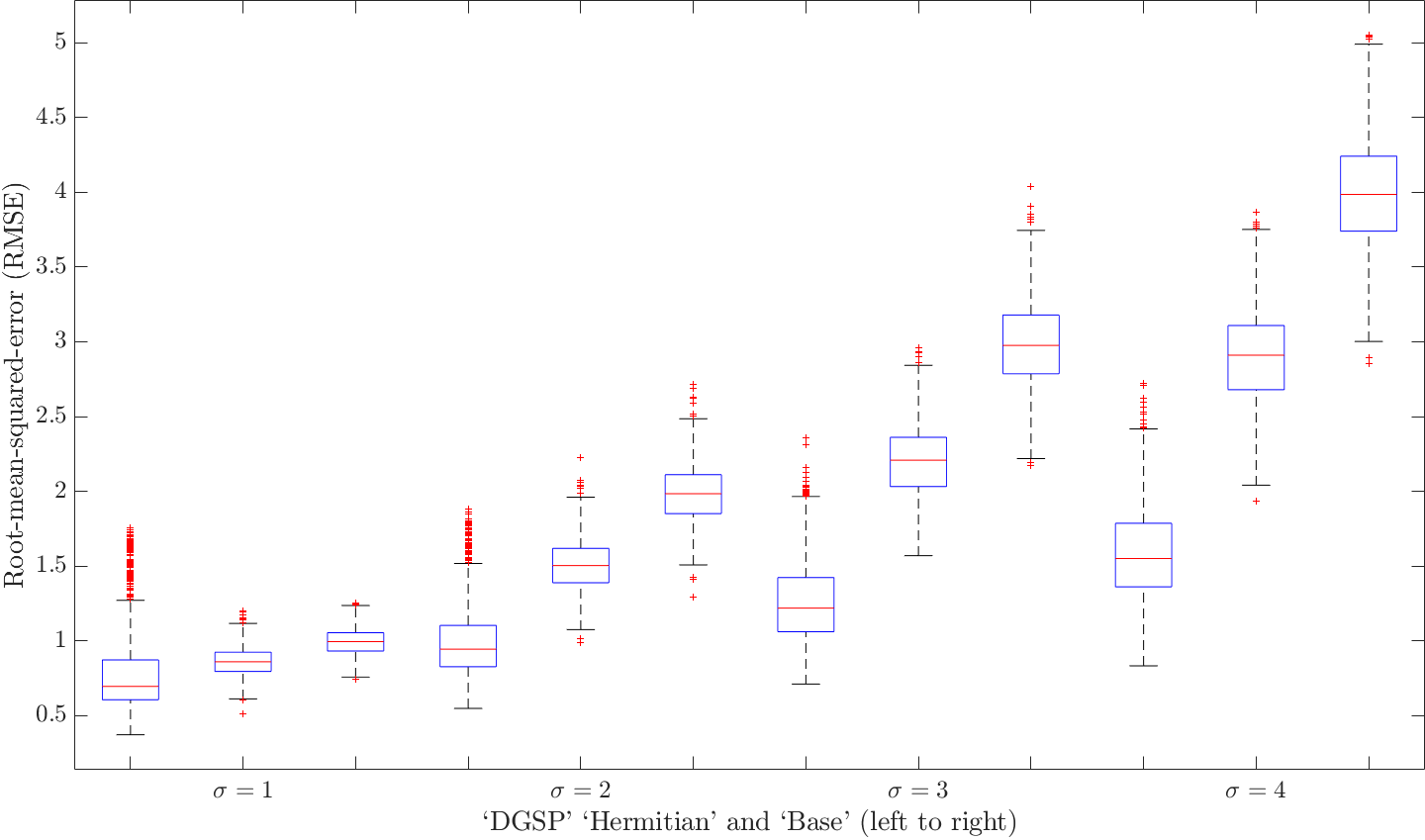}
    \includegraphics[scale=0.3]{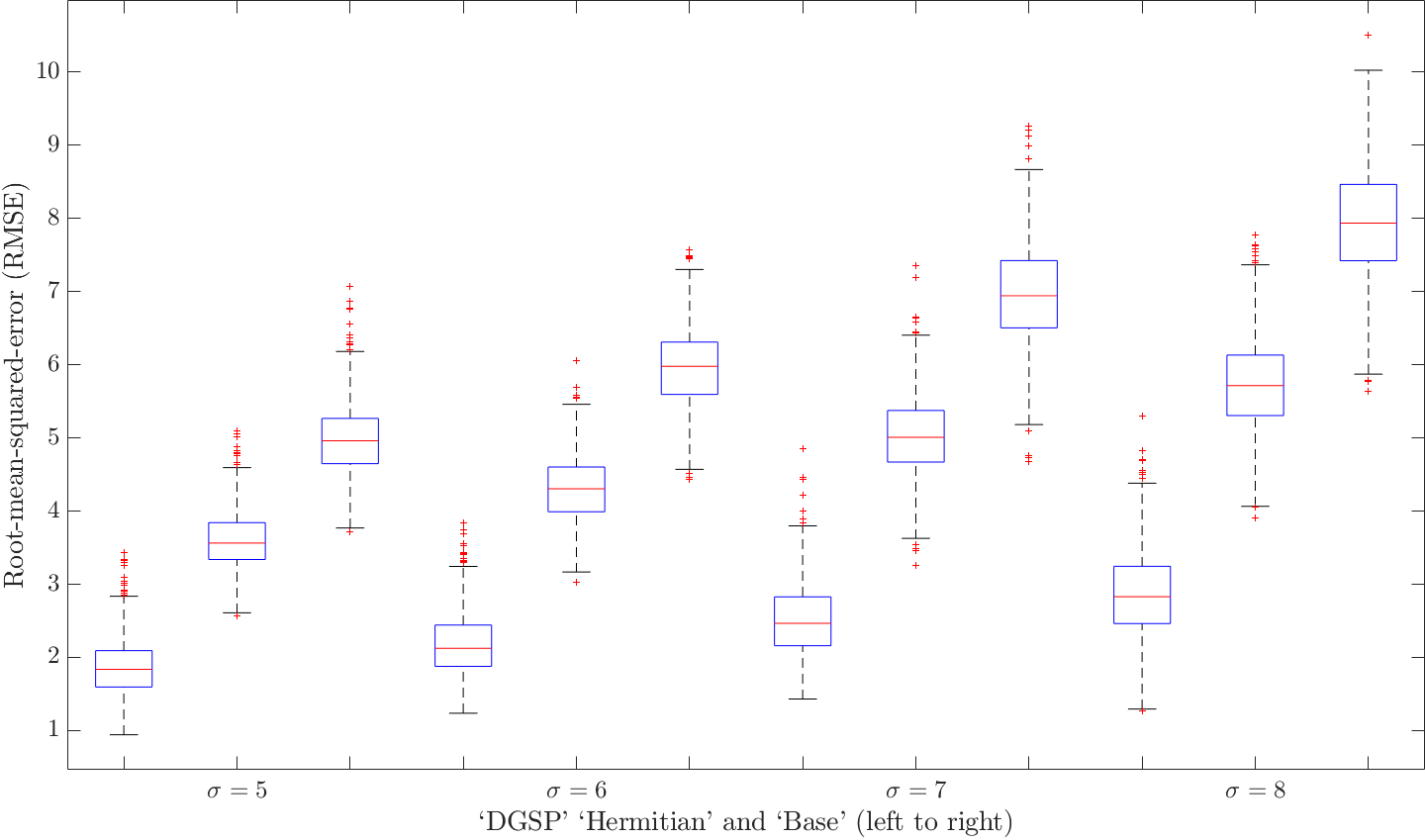}
    \caption{The RMSE of noisy signal recovery.}
    \label{fig:tro}
\end{figure}

\section{Conclusions} \label{sec:conc}
This paper introduces a novel signal processing framework for directed graphs. Our approach utilizes the polar decomposition to introduce a novel Fourier transform with a larger frequency domain than that of traditional GSP. This enables us to define convolution using a standard procedure. Our method possesses two notable features: it ensures losslessness, as the shift operator can be entirely reconstructed from the polar decomposition factors, and it encompasses traditional graph signal processing when applied to directed graphs. For future work, we shall study how the framework might be used for learning directed graph topology. 

\bibliographystyle{IEEEtran}
\bibliography{IEEEabrv,StringDefinitions,refs}

\begin{thebibliography}{10}
\providecommand{\url}[1]{#1}
\csname url@samestyle\endcsname
\providecommand{\newblock}{\relax}
\providecommand{\bibinfo}[2]{#2}
\providecommand{\BIBentrySTDinterwordspacing}{\spaceskip=0pt\relax}
\providecommand{\BIBentryALTinterwordstretchfactor}{4}
\providecommand{\BIBentryALTinterwordspacing}{\spaceskip=\fontdimen2\font plus
\BIBentryALTinterwordstretchfactor\fontdimen3\font minus
  \fontdimen4\font\relax}
\providecommand{\BIBforeignlanguage}[2]{{%
\expandafter\ifx\csname l@#1\endcsname\relax
\typeout{** WARNING: IEEEtran.bst: No hyphenation pattern has been}%
\typeout{** loaded for the language `#1'. Using the pattern for}%
\typeout{** the default language instead.}%
\else
\language=\csname l@#1\endcsname
\fi
#2}}
\providecommand{\BIBdecl}{\relax}
\BIBdecl

\bibitem{Shu13}
D.~Shuman, S.~Narang, P.~Frossard, A.~Ortega, and P.~Vandergheynst, ``The
  emerging field of signal processing on graphs: Extending high-dimensional
  data analysis to networks and other irregular domains,'' \emph{IEEE Signal
  Process. Mag.}, vol.~30, no.~3, pp. 83--98, 2013.

\bibitem{San13}
A.~Sandryhaila and J.~M.~F. Moura, ``Discrete signal processing on graphs,''
  \emph{IEEE Trans. Signal Process.}, vol.~61, no.~7, pp. 1644--1656, 2013.

\bibitem{Tsi15}
M.~{Tsitsvero}, S.~{Barbarossa}, and P.~{Di Lorenzo}, ``Signals on graphs:
  Uncertainty principle and sampling,'' \emph{{IEEE} Trans. Signal Process.},
  vol.~64, no.~18, pp. 4845--4860, 2016.

\bibitem{OrtFroKov:J18}
A.~Ortega, P.~Frossard, J.~Kovacevic, J.~Moura, and P.~Vandergheynst, ``Graph
  signal processing: Overview, challenges, and applications,'' \emph{Proc.
  {IEEE}}, vol. 106, no.~5, pp. 808--828, 2018.

\bibitem{Ji19}
F.~Ji and W.~P. Tay, ``A {Hilbert} space theory of generalized graph signal
  processing,'' \emph{{IEEE} Trans. Signal Process.}, vol.~67, no.~24, pp. 6188
  -- 6203, 2019.

\bibitem{LeuMarMou:J23}
G.~Leus, A.~Marques, J.~Moura, A.~Ortega, and D.~Shuman, ``Graph signal
  processing: History, development, impact, and outlook,'' \emph{{IEEE} Signal
  Process. Mag.}, vol.~40, no.~4, pp. 49--60, Jun. 2023.

\bibitem{JiaFenTay:J23}
X.~Jian, F.~Ji, and W.~P. Tay, ``Generalizing graph signal processing: High
  dimensional spaces, models and structures,'' \emph{Foundations and Trends®
  in Signal Processing}, vol.~17, no.~3, pp. 209--290, 2023.

\bibitem{Ji22}
F.~Ji, W.~P. Tay, and A.~Ortega, ``Graph signal processing over a probability
  space of shift operators,'' \emph{IEEE Trans. Signal Process.}, vol.~71, pp.
  1159--1174, 2023.

\bibitem{sarcheshmehpour2021federated}
Y.~Sarcheshmehpour, M.~Leinonen, and A.~Jung, ``Federated learning from big
  data over networks,'' in \emph{ICASSP}, 2021, pp. 3055--3059.

\bibitem{giraldo2020semi}
J.~Giraldo and T.~Bouwmans, ``Semi-supervised background subtraction of unseen
  videos: Minimization of the total variation of graph signals,'' in
  \emph{ICIP}.\hskip 1em plus 0.5em minus 0.4em\relax IEEE, 2020, pp.
  3224--3228.

\bibitem{Sandryhaila_2013_2}
A.~Sandryhaila and J.~Moura, ``Discrete signal processing on graphs: Frequency
  analysis,'' \emph{IEEE Trans. Signal Process.}, vol.~62, no.~12, pp.
  3042--3054, 2014.

\bibitem{Domingos_2020}
J.~Domingos and J.~M.~F. Moura, ``Graph {F}ourier transform: A stable
  approximation,'' \emph{IEEE Trans. Signal Process.}, vol.~68, pp. 4422--4437,
  2020.

\bibitem{Chen_2022}
Y.~Chen, C.~Cheng, and Q.~Sun, ``Graph {F}ourier transform based on singular
  value decomposition of directed laplacian,'' 2022.

\bibitem{symmetrizations}
V.~Satuluri and S.~Parthasarathy, ``Symmetrizations for clustering directed
  graphs,'' in \emph{EDBT}, 2011, p. 343–354.

\bibitem{Sardellitti}
S.~Sardellitti and S.~Barbarossa, ``On the graph {F}ourier transform for
  directed graphs,'' \emph{IEEE J Sel Top Signal Process.}, vol.~11, no.~6, pp.
  796--811, 2017.

\bibitem{MHASKAR2018611}
H.~Mhaskar, ``A unified framework for harmonic analysis of functions on
  directed graphs and changing data,'' \emph{Appl Comput Harmon Anal.},
  vol.~44, no.~3, pp. 611--644, 2018.

\bibitem{unitaryshift}
B.~S. Dees, L.~Stankovic, M.~Dakovic, A.~G. Constantinides, and D.~P. Mandic,
  ``Unitary shift operators on a graph,'' 2019.

\bibitem{Furutani2019GraphSP}
S.~Furutani, T.~Shibahara, M.~Akiyama, K.~Hato, and M.~Aida, ``Graph signal
  processing for directed graphs based on the {H}ermitian laplacian,'' in
  \emph{ECML/PKDD}, 2019.

\bibitem{9325908}
B.~Seifert and M.~Püschel, ``Digraph signal processing with generalized
  boundary conditions,'' \emph{IEEE Trans. Signal Process.}, vol.~69, pp.
  1422--1437, 2021.

\bibitem{Hal15}
B.~Hall, \emph{Lie Groups, Lie Algebras, and Representations: An Elementary
  Introduction}, 2nd~ed.\hskip 1em plus 0.5em minus 0.4em\relax Springer, 2015.

\bibitem{Kwa23}
S.~Kwak, L.~Shimabukuro, and A.~Ortega, ``Frequency analysis and filter design
  for directed graphs with polar decomposition,'' \emph{arXiv preprint
  arXiv:2312.11421}, 2023.

\bibitem{Tao12}
T.~Tao, \emph{Topics in random matrix theory}.\hskip 1em plus 0.5em minus
  0.4em\relax American Mathematical Soc., 2012.

\bibitem{Def16}
M.~Defferrard, X.~Bresson, and P.~Vandergheynst, ``Convolutional neural
  networks on graphs with fast localized spectral filtering,'' in
  \emph{NeurIPS}, 2016.

\bibitem{Jif23}
F.~Ji, W.~P. Tay, and A.~Ortega, ``The faces of convolution: from the fourier
  theory to algebraic signal processing,'' \emph{arXiv preprint
  arXiv:2307.07926}, 2023.

\bibitem{Li95}
R.~Li, ``New perturbation bounds for the unitary polar factor,'' \emph{SIAM J.
  Matrix Anal. Appl.}, vol.~16, pp. 327--332, 1995.

\bibitem{Che89}
C.~Chen and J.~Sun, ``Perturbation bounds for the polar factors,'' \emph{J.
  Comput. Math.}, vol.~7, no.~4, pp. 397--401, 1989.

\bibitem{Ste90}
G.~Stewart, ``Perturbation theory for the singular value decomposition,'' 1990.

\bibitem{Bha96}
R.~Bhatia, \emph{Matrix Analysis}.\hskip 1em plus 0.5em minus 0.4em\relax
  Springer Science and Business Media, 1996.

\bibitem{Bha22}
S.~Bhattacharya, ``On some bounds on the perturbation of invariant subspaces of
  normal matrices with application to a graph connection problem,'' \emph{J.
  Inequal. Appl.}, no.~1, p.~75, 2022.

\bibitem{Sin64}
R.~Sinkhorn, ``A relationship between arbitrary positive matrices and doubly
  stochastic matrices,'' \emph{Ann. Math. Statist.}, vol.~35, p. 876–879,
  1964.

\end{thebibliography}

\end{document}